\documentclass[a4paper,twocolumn,11pt,unpublished]{quantumarticle}
\pdfoutput=1

\usepackage[numbers,sort&compress]{natbib}

\usepackage{graphicx} 
\usepackage{amsmath,amssymb,amsthm,mathtools}
\usepackage{qcircuit}
\usepackage{braket}
\usepackage{float}
\usepackage{bm} 
\usepackage[colorlinks=true, allcolors=blue]{hyperref}
\newtheorem{assumption}{Assumption}
\newtheorem{definition}{Definition}
\newtheorem{lemma}{Lemma}
\newtheorem{theorem}{Theorem}

\newcommand{\Tr}{\operatorname{Tr}}
\newcommand{\Var}{\operatorname{Var}}
\newcommand{\E}{\mathbb{E}}
\newcommand{\Id}{\mathsf{Id}}

\newcommand{\poly}{\operatorname{poly}}
\newcommand{\uni}{\mathrm{uni}}
\newcommand{\dis}{\mathrm{dis}}

\newcommand{\gap}{\Delta}
\newcommand{\norm}[1]{\left\lVert #1 \right\rVert}

\begin{document}

\title{A Variational Dissipative Framework for Quantum Algorithms}

\date{May 25, 2026}

\author{Yuan Yao}
\email{yuan.yao@pku.edu.cn}
\affiliation{Center on Frontiers of Computing Studies, Peking University, Beijing 100871, China}
\affiliation{School of Computer Science, Peking University, Beijing 100871, China}

\author{Ruipeng Xing}
\affiliation{Faculty of Information Science and Engineering, Ocean University of China, Qingdao 266100, China}%
\affiliation{Center on Frontiers of Computing Studies, Peking University, Beijing 100871, China}
\affiliation{School of Computer Science, Peking University, Beijing 100871, China}

\author{Yongjian Gu}
\affiliation{Faculty of Information Science and Engineering, Ocean University of China, Qingdao 266100, China}%
\email{yjgu@.ouc.edu.cn}

\author{Yiming Huang}
\email{yiminghwang@gmail.com}
\affiliation{Institute of High Energy Physics, Chinese Academy of Sciences, Beijing 100049, China}
\affiliation{China Center of Advanced Science and Technology, Beijing 100190, China }
\affiliation{Center on Frontiers of Computing Studies, Peking University, Beijing 100871, China}
\affiliation{School of Computer Science, Peking University, Beijing 100871, China}

\author{Xiao Yuan}
\email{xiaoyuan@pku.edu.cn}
\affiliation{Center on Frontiers of Computing Studies, Peking University, Beijing 100871, China}
\affiliation{School of Computer Science, Peking University, Beijing 100871, China}

\begin{abstract}
Dissipation engineering has attracted growing interest as an approach to controlling open quantum systems through engineered system-environment interactions. Standard variational quantum circuits are usually built from unitary operations and therefore explore only a restricted family of states. To go beyond this limitation, we introduce a variational dissipative framework in which ancilla-assisted engineered dissipation is incorporated into parameterized quantum algorithms. In this framework, system-only variational layers are combined with trainable dissipative modules, so that the circuit can prepare a broader class of mixed states through ancilla-assisted nonunitary transformations.
Within this framework, the same ancilla-assisted dissipative block is used in two representative settings with different objectives. For ground-state search, it is integrated into a dissipative variational quantum eigensolver to improve the convergence toward low-energy states. For state recovery, it is trained as a recovery channel to suppress preparation noise and enhance fidelity with the target state. In both cases, the block is realized through parameterized system-ancilla couplings followed by ancilla reset and trace-out.
Our results show that engineered dissipation can be incorporated into variational quantum circuits as a reusable trainable primitive rather than treated only as a source of noise. In this sense, the proposed framework identifies ancilla-assisted dissipative channels as a concrete variational resource that can support both optimization and recovery tasks within a unified design.
\end{abstract}

\maketitle


\section{Introduction}
Variational quantum algorithms, and in particular the variational quantum eigensolver (VQE), have become one of the central paradigms for near-term quantum computation~\cite{Peruzzo2014,McClean2016,Kandala2017,Cerezo2021,Tilly2022}. By combining parameterized quantum circuits with classical optimization, VQE provides a hardware-compatible route to approximating ground-state energies in quantum chemistry and other many-body systems. However, noisy intermediate-scale quantum devices remain strongly affected by gate errors, decoherence, readout noise, and finite sampling effects, while full fault-tolerant quantum error correction is still resource demanding for near-term hardware~\cite{Preskill2018,Bharti2022,Devitt2013,LidarBrun2013}. Consequently, substantial effort has been devoted to quantum error mitigation and noise-aware variational strategies, including zero-noise extrapolation, probabilistic error cancellation, and variational error suppression~\cite{Temme2017,LiBenjamin2017,Endo2018,Endo2021}.

At the same time, these approaches usually retain the standard assumption that the variational ansatz itself is generated by unitary gates. Most standard variational ans\"atze therefore prepare only pure states prior to measurement, restricting the accessible variational family to coherent state preparation alone. This raises a more structural question: rather than only mitigating unwanted noise around a unitary ansatz, can one enlarge the variational search space by incorporating controllable nonunitary processes directly into the variational design? From this perspective, open-system dynamics provides a natural route, since it gives rise to mixed states and system transformations beyond standard unitary ans\"atze.

Open quantum systems inevitably interact with their surroundings, leading to decoherence, dissipation, and information loss that are commonly described by quantum dynamical maps and, under the Born-Markov approximation, by the Gorini-Kossakowski-Lindblad-Sudarshan (GKLS) master equation~\cite{BreuerPetruccione2002,RivasHuelga2012,Gorini1976,Lindblad1976}. While such environmental effects are usually regarded as detrimental, reservoir engineering and dissipation engineering show that suitably designed system-environment couplings can instead realize useful nonunitary dynamics, prepare target states, and stabilize quantum resources~\cite{Poyatos1996,Diehl2008,Verstraete2009,Barreiro2011,Schindler2013}. In this sense, dissipation can be treated not only as noise, but also as a controllable variational resource.

On gate-based quantum processors, this idea can be implemented through ancilla-assisted quantum channels. Any completely positive trace-preserving map admits a Stinespring dilation, so that trainable system-ancilla unitaries followed by ancilla trace-out and reset provide a natural circuit-level mechanism for realizing effective nonunitary transformations~\cite{Stinespring1955,Kraus1983,NielsenChuang2010}. Recent work has begun to explore the role of nonunitary dynamics in variational quantum algorithms from several directions. On the one hand, variational methods have been extended beyond standard unitary real-time evolution to settings such as imaginary-time evolution and more general nonunitary simulation~\cite{Yuan2019,McArdle2019,Endo2020}. On the other hand, a number of works have directly considered open-system variational methods, including nonequilibrium steady states and Lindblad dynamics~\cite{yoshioka2020dissipative,watad2024lindblad,chen2024adaptive}. Others have introduced dissipative ingredients such as RESET operations, stochastic gates, or ancilla trace-out directly into the variational ansatz~\cite{ilin2024dvqa}. More task-specific proposals have further considered dissipative eigensolvers for ground-state preparation, Gibbs-state preparation, and lattice gauge theories~\cite{cobos2024noiseaware,cubitt2023dqe}. 

While previous studies have incorporated dissipative ingredients into specific variational settings, a unified ancilla-assisted framework that supports different variational objectives within the same trainable channel architecture is still lacking. In this work, we introduce such a variational dissipative framework for parameterized quantum algorithms. Its basic building block is a system-ancilla unitary followed by ancilla trace-out and reset, which defines a parameterized completely positive trace-preserving map on the main system.
Within this framework, we study two representative tasks. The first is dissipative VQE for ground-state search, in which system-only variational layers are interleaved with trainable dissipative modules. The second is dissipative state recovery, in which the same ancilla-assisted channel structure is trained to improve the fidelity of imperfectly prepared states with respect to a target state. Numerical simulations on representative spin Hamiltonians and target states indicate that this framework can accelerate convergence, enhance noise robustness, and improve final-state fidelity with only a modest number of ancillas.


This paper is organized as follows. In Sec.~\ref{sec:theory}, we review the open-system and channel-theoretic foundations of the method. In Sec.~\ref{sec:framework}, we introduce the variational dissipative framework and define its two task-specific realizations: dissipative VQE and dissipative recovery. In Sec.~\ref{sec:numerics}, we present numerical simulations for ground-state preparation and fidelity recovery. Finally, we summarize the results and discuss possible extensions of trainable dissipation to other variational quantum algorithms.

\section{Theoretical Background}
\label{sec:theory}

In this section, we summarize the theoretical ingredients underlying our framework. We begin with the ground-state search task in the VQE, which serves as the primary variational setting considered in this work. We then introduce the open-system and channel-based description of engineered dissipation, which provides the basis for extending the standard unitary ansatz to a dissipative variational framework.

\subsection{Ground-state search in the VQE framework}

In the standard VQE~\cite{Peruzzo2014,McClean2016,Kandala2017,Tilly2022}, one prepares a parameterized trial state
\begin{equation}
\rho_\theta
=
U(\theta)\ket{0}\bra{0}^{\otimes n}U^\dagger(\theta),
\end{equation}
and minimizes the corresponding energy expectation value
\begin{equation}
E(\theta)=\mathrm{Tr}[H\rho_\theta].
\label{eq:vqe-energy}
\end{equation}
By the variational principle, if $E_0$ denotes the ground-state energy of $H$, then
\begin{equation}
E(\theta)=\mathrm{Tr}[H\rho_\theta]\ge E_0,
\label{eq:variational-bound}
\end{equation}
so the variational energy provides an upper bound on the true ground-state energy.

In the standard unitary formulation of VQE, the variational ansatz prepares a family of pure states. In this setting, the accessible variational manifold is restricted to coherent state preparation alone. Allowing dissipative operations provides a natural way to enlarge this variational family, since the resulting trial states may be mixed and can be described by parameterized completely positive trace-preserving (CPTP) maps~\cite{Yuan2019,Endo2020}. Accordingly, the trial state may be written more generally as
\begin{equation}
\rho_\theta
=
\mathcal{D}_\theta\!\left(\ket{0}\bra{0}^{\otimes n}\right),
\label{eq:diss-vqe-state}
\end{equation}
where $\mathcal{D}_\theta$ denotes a parameterized dissipative channel. The optimization objective remains the energy expectation value in Eq.~\eqref{eq:vqe-energy}, but the admissible state manifold is enlarged beyond purely unitary state preparation. This observation provides the basic motivation for incorporating engineered dissipation into the VQE framework.

\subsection{Open-system dynamics and engineered dissipation}

An open quantum system does not evolve in isolation, but interacts with external degrees of freedom that act as an environment. As a result, the reduced dynamics of the system is generally nonunitary: coherence can be lost, entropy can be exchanged with the environment, and information can flow out of the system. Under the Born-Markov approximation, this reduced dynamics is described by the GKLS master equation~\cite{Gorini1976,Lindblad1976,BreuerPetruccione2002,RivasHuelga2012}
\begin{equation}
\frac{d\rho_S}{dt}
=
-i[H_S,\rho_S]
+
\sum_k \gamma_k
\left(
L_k \rho_S L_k^\dagger
-
\frac{1}{2}\{L_k^\dagger L_k,\rho_S\}
\right),
\label{eq:gkls}
\end{equation}
where $H_S$ is the system Hamiltonian, and $\{L_k,\gamma_k\}$ are the Lindblad jump operators and their corresponding dissipation rates.

For our purposes, the main point is not only that open-system dynamics is nonunitary, but that it can also be engineered. If the coupling between the system and auxiliary degrees of freedom is controllable, then dissipation need not be viewed only as an unwanted effect. Instead, it can be used to realize designed state transformations that are inaccessible to purely unitary evolution on the system alone.

On a gate-based quantum computer, this idea can be formulated in a simple and general way through ancilla-assisted quantum channels. Any completely positive trace-preserving (CPTP) map $\mathcal{E}$ acting on the system can be written in the Stinespring form~\cite{Stinespring1955,Kraus1983,NielsenChuang2010}
\begin{equation}
\mathcal{E}(\rho_S)
=
\mathrm{Tr}_A\!\left[
V_{SA}
\left(
\rho_S \otimes \ket{0}\bra{0}_A
\right)
V_{SA}^\dagger
\right],
\label{eq:stinespring}
\end{equation}
where $V_{SA}$ is a joint unitary acting on the system and an ancilla initialized in a reference state. In this representation, the nonunitary dynamics of the system arises from a larger unitary evolution followed by tracing out the ancilla.

Projecting the ancilla onto an orthonormal basis $\{\ket{i}_A\}$ yields the corresponding Kraus representation~\cite{Kraus1983,NielsenChuang2010}
\begin{equation}
K_i = \bra{i}_A V_{SA} \ket{0}_A,
\qquad
\mathcal{E}(\rho_S)=\sum_i K_i \rho_S K_i^\dagger.
\label{eq:kraus}
\end{equation}
Thus, the ancilla-assisted picture, the Stinespring dilation, and the Kraus-operator description are simply different ways of expressing the same open-system transformation.

The same channel picture also connects dissipation with recovery. If the ancilla is traced out, one obtains an unconditional dissipative channel. If it is measured and used for conditional correction, the same enlarged evolution can instead be interpreted as a recovery process. In this sense, dissipation and recovery can be viewed within a common ancilla-assisted framework, differing mainly in how the auxiliary degrees of freedom are processed after the joint evolution~\cite{WisemanMilburn2010}.


The above considerations provide the conceptual basis for our variational dissipative construction: by augmenting standard unitary state preparation with trainable ancilla-assisted dissipation, one can extend the usual VQE ansatz to a more general open-system variational framework.

\section{Variational Dissipative Framework}
\label{sec:framework}

We now introduce the variational dissipative framework used in this work. The central idea is to combine system-only variational layers with ancilla-assisted dissipative modules, so that the circuit includes both coherent evolution and trainable nonunitary state transformations. We first describe the common circuit structure and then present its two task-specific realizations: ground-state search and state recovery.

\subsection{Overall architecture}

\begin{figure*}[ht]
    \centering
    \includegraphics[width=0.8\linewidth]{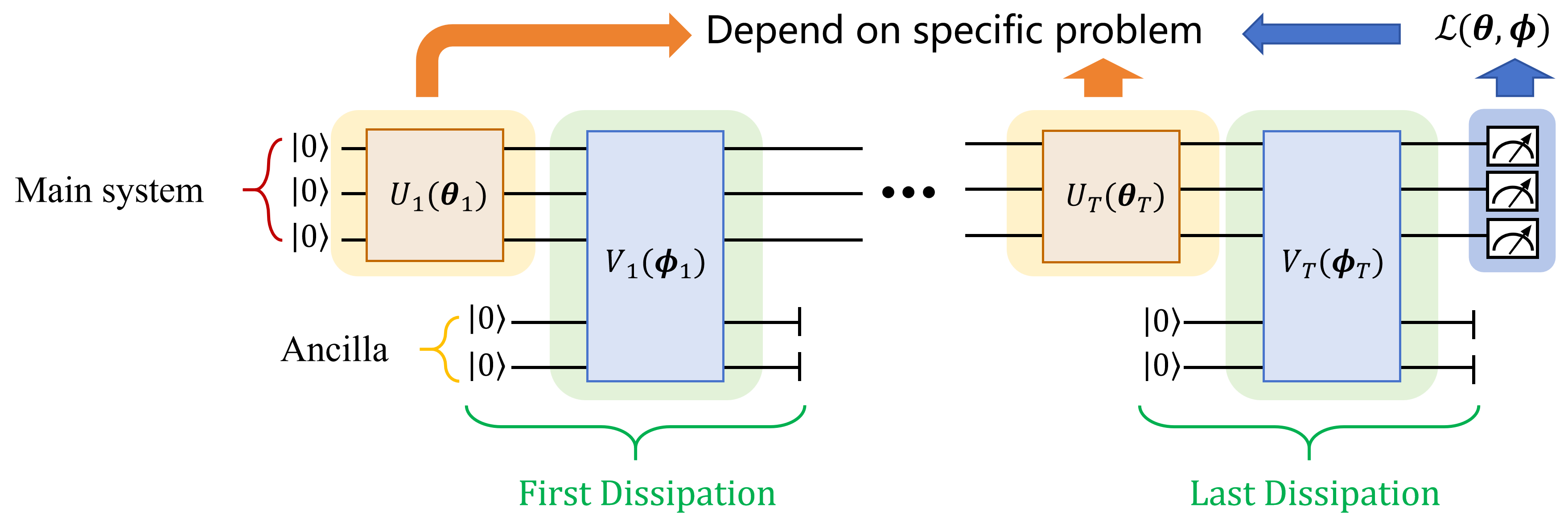}
    \caption{General architecture of the variational dissipative framework. The circuit combines a system-only variational module with an ancilla-assisted dissipative module.}
    \label{NS0}
\end{figure*}

Fig.~\ref{NS0} illustrates the overall architecture of the proposed framework. The circuit consists of a main system with $n$ qubits, which carries the computational state, and an ancilla register with $m$ qubits, which provides auxiliary degrees of freedom for engineered dissipation. The framework contains two complementary modules. The first is a system-only variational block $U_{\mathrm{vqe}}$, whose role is to explore a task-dependent variational manifold through coherent unitary evolution. The second is a variational dissipative block $U_{\mathrm{dis}}$, whose role is to couple the system to the ancillas and induce an effective nonunitary transformation after the ancillas are traced out and reset.

These two modules define the common structure of the framework. In the ground-state-search setting, they are interleaved and optimized jointly through an energy objective. In the state-recovery setting, the same dissipative module is retained, while the system-only variational block is removed and the objective is replaced by a fidelity-based target. The main point is that the same ancilla-assisted dissipative mechanism can be used across different tasks, even though the system-side transformation and the optimization objective are different.

\subsection{Ancilla-assisted trainable dissipation}

Each dissipative block is modeled as a parameterized Stinespring dilation of a trainable quantum channel,
\begin{equation}
\mathcal{D}_{\theta}(\rho_S)
=
\mathrm{Tr}_A\left[
V_{SA}(\theta)
\left(
\rho_S \otimes \ket{0}\bra{0}_A^{\otimes m}
\right)
V_{SA}^\dagger(\theta)
\right],
\label{eq:var-diss-channel}
\end{equation}
where the ancilla qubits are initialized in $\ket{0}_A^{\otimes m}$, interact jointly with the system through a parameterized unitary $V_{SA}(\theta)$, and are then traced out. The trainable parameters are therefore carried by the system-ancilla interaction itself. Since the ancillas are reset before each application, every dissipative block defines a completely positive trace-preserving map acting effectively on the system alone. The resulting variational circuit should thus be viewed more generally as a sequence of coherent and dissipative transformations.

This construction provides a simple way to introduce trainable nonunitarity into variational quantum protocols. In contrast to a fixed physical noise channel, the dissipative map in Eq.~\eqref{eq:var-diss-channel} is optimized together with the other task-dependent circuit parameters. The system-ancilla interaction therefore acts as a controllable channel layer that reshapes the reduced state of the system during optimization.

In our numerical implementation, the dissipative module is composed of arbitrary single-qubit rotations on the involved qubits together with system-ancilla $i$SWAP couplings~\cite{Schuch2003,Barends2019}. We use a fully connected system-ancilla pattern, where each system qubit is coupled to each ancilla qubit, leading to $nm$ system-ancilla $i$SWAP gates in one dissipative round for a system of $n$ qubits and an ancilla register of $m$ qubits. Other choices of connectivity or coupling gates can also be incorporated within the same framework.

\subsection{Ground-state-search instantiation}

For the ground-state-search task, the variational protocol alternates between coherent system-side transformations and ancilla-assisted dissipative updates. The goal is to minimize the energy expectation value of a target Hamiltonian through a VQE-type optimization.

The dissipative VQE protocol can be formulated recursively. Let $\rho_{\mathrm{in}}$ denote the initial system state. In the present implementation, we take $\rho_{\mathrm{in}}=\ket{0}\bra{0}^{\otimes n}$, so that
\begin{equation}
\rho_S^{(0)}
=
U_{\mathrm{vqe}}^{(0)}
\rho_{\mathrm{in}}
U_{\mathrm{vqe}}^{(0)\dagger}.
\label{VQE2}
\end{equation}
Then, for $t=1,\dots,T$, each dissipative round is defined by
\begin{equation}
\rho_S^{(t)}
=
\mathrm{Tr}_A
\left[
U_{\mathrm{dis}}^{(t)}
\left(
U_{\mathrm{vqe}}^{(t)}
\rho_S^{(t-1)}
U_{\mathrm{vqe}}^{(t)\dagger}
\otimes
|0\rangle\langle 0|_A^{\otimes m}
\right)
U_{\mathrm{dis}}^{(t)\dagger}
\right],
\label{VQE3}
\end{equation}
where $\mathrm{Tr}_A$ denotes the partial trace over the ancilla register. After each round, the ancillas are traced out, reset to $|0\rangle^{\otimes m}$, and reintroduced in the next step. The loss function is the final energy expectation value
\begin{equation}
\mathcal{L}_{\mathrm{VQE}}
=
\mathrm{Tr}\!\left[ H \rho_S^{(T)} \right].
\label{VQE4}
\end{equation}

In this instantiation, both the system-only variational layers and the dissipative blocks are parameterized and optimized jointly. The protocol therefore alternates between coherent exploration of the variational manifold and dissipative reshaping of the system state within a single optimization loop. We also provide the analysis about the performance between original VQE and dissipative VQE in Appendix.~\ref{sec:analysis}.

\subsection{State-recovery instantiation}

For the state-recovery task, the goal is no longer to minimize an energy expectation value, but to improve the fidelity of an imperfect input state with respect to a target state. To this end, we use the same ancilla-assisted dissipative block as in the VQE setting, but remove the system-only variational layers. The recovery protocol therefore consists of repeated dissipative updates applied directly to the input state, so that the reduced state of the main system is progressively reshaped toward the target state through a trainable system-ancilla channel.

For the recovery task considered here, the target state is taken to be a pure state
\begin{equation}
\rho_{\mathrm{tar}}
=
\ket{\psi_{\mathrm{tar}}}\bra{\psi_{\mathrm{tar}}},
\label{Fidelity1}
\end{equation}
and let $\rho_{\mathrm{in}}$ denote the corresponding imperfect input state. The purpose of the recovery protocol is to transform $\rho_{\mathrm{in}}$ into a final reduced state that is as close as possible to $\rho_{\mathrm{tar}}$.

We initialize the protocol with
\begin{equation}
\rho_0 = \rho_{\mathrm{in}},
\label{Fidelity2a}
\end{equation}
and then define the recovery dynamics recursively by
\begin{equation}
\rho_t
=
\mathrm{Tr}_A
\left[
U_{\mathrm{dis}}^{(t)}
\left(
\rho_{t-1}
\otimes
|0\rangle\langle 0|_A^{\otimes m}
\right)
U_{\mathrm{dis}}^{(t)\dagger}
\right],
\label{Fidelity2}
\end{equation}
where $t=1,\dots,T$ and the ancillas are traced out and reset after each round.

The performance metric is the fidelity with the target state,
\begin{equation}
F_T
=
F\left(\rho_T,\rho_{\mathrm{tar}}\right)
=
\left(
\mathrm{Tr}
\sqrt{
\sqrt{\rho_{\mathrm{tar}}}
\, \rho_T
\sqrt{\rho_{\mathrm{tar}}}
}
\right)^2,
\label{Fidelity3}
\end{equation}
which reduces to
\begin{equation}
F_T
=
\mathrm{Tr}
\left(
\rho_T \rho_{\mathrm{tar}}
\right)
\label{Fidelity4}
\end{equation}
because $\rho_{\mathrm{tar}}$ is pure. Equivalently, one may minimize the loss
\begin{equation}
\mathcal{L}_{\mathrm{rec}}
=
1-F_T.
\label{Fidelity5}
\end{equation}

In this case, the dissipative block is again parameterized and trained, but now against a fidelity-based objective rather than an energy-based one. The recovery task therefore shares the same ancilla-assisted dissipative structure as the VQE task, while differing in both the system-side transformation and the optimization target.

\section{Numerical simulation}
\label{sec:numerics}

The variational dissipative framework developed above suggests that trainable system-ancilla dissipation can provide additional channel degrees of freedom for reshaping the system state. In the ground-state-search setting, this mechanism is expected to accelerate the variational energy descent, improve the final convergence accuracy, and enhance robustness against noise. In the state-recovery setting, the same mechanism is expected to improve the fidelity of imperfect input states with respect to the target state.

In this section, we organize the numerical results around three main questions. First, does the framework improve task performance in practice? Second, does this improvement remain robust and general across different settings? Third, how does the performance depend on the dissipative resources, including the number of dissipative rounds and ancillas?

For the ground-state-search task, we benchmark the dissipative VQE on the open-boundary nearest-neighbor spin Hamiltonian
\begin{equation}
    \begin{split}
        H
        =&
        \sum_{i=1}^{n-1}
        \left(
        J_x X_i X_{i+1}
        +
        J_y Y_i Y_{i+1}
        +
        J_z Z_i Z_{i+1}
        \right)
        \\
        &+
        \sum_{i=1}^{n}
        \left(
        h_x X_i
        +
        h_y Y_i
        +
        h_z Z_i
        \right),
    \end{split}
\label{VQE1}
\end{equation}
where the first summation describes nearest-neighbor interaction terms and the second summation represents local fields. Here $J_x,J_y,J_z$ and $h_x,h_y,h_z$ are real coefficients that specify the model. Unless otherwise specified, the exact ground-state energies used as references are obtained by exact diagonalization of the corresponding Hamiltonian matrices.

For the state-recovery task, we consider three representative target states with different preparation structures. The first is the $W$ state~\cite{Dur2000}
\begin{equation}
\ket{\psi_{W}}
=
\ket{W_n}
=
\frac{1}{\sqrt{n}}
\sum_{i=1}^{n}
\ket{0\cdots 010\cdots 0},
\label{Fidelity6}
\end{equation}
the second is the uniform superposition state
\begin{equation}
\ket{\psi_{+}}
=
\ket{+}^{\otimes n}
=
\frac{1}{\sqrt{2^n}}
\sum_{x\in\{0,1\}^n}
\ket{x},
\label{Fidelity7}
\end{equation}
and the third is a dressed cluster state
\begin{equation}
\ket{\psi_{\mathrm{DC}}}
=
\prod_{d=1}^{D}
\left(
\prod_{i=1}^{n-1} CZ_{i,i+1}
\right)
\left(
\bigotimes_{i=1}^{n} R_y(\alpha_i^{(d)})
\right)
\ket{+}^{\otimes n},
\label{Fidelity8}
\end{equation}
with $D=3$ and $\alpha_i^{(d)}=\pi/4$ for all $i$ and $d$~\cite{Briegel2001,Raussendorf2001,Hein2004}.

\subsection{The framework improves both optimization and recovery performance}

We first examine whether the proposed framework improves task performance in practice. For the ground-state-search task, we begin with the representative choice $J_x=1.0$ and $h_z=0.3$ in Eq.~\eqref{VQE1}, with all other coefficients set to zero, corresponding to an $XX$-type model in a transverse $Z$ field.

\begin{figure}[t]
    \centering
    \includegraphics[width=\linewidth]{Graph/NS2_VQEAncilla.jpg}
    \caption{Training curves of the dissipative VQE with different numbers of ancillas. Panels (a) and (b) correspond to different main-system sizes. As the number of ancillas increases, the energy decreases faster during training and the final converged energy becomes lower, demonstrating improved convergence speed and variational accuracy.}
    \label{NS2_VQEAncilla}
\end{figure}

Fig.~\ref{NS2_VQEAncilla} shows that the training performance improves systematically with the ancilla number. In this test, the learning rate is set to $lr=0.2$, the number of dissipative rounds is fixed at $T=1$, and the training is performed under depolarizing noise with strength $p=0.1$. In both panels, increasing $m$ leads to a faster early-stage energy descent and a lower final converged energy compared with the ancilla-free baseline. For $n=3$ in panel (a), the curves separate gradually as $m$ increases, while for $n=5$ in panel (b), the same trend becomes much more pronounced, with larger ancilla numbers producing clearly steeper descent and substantially lower final energies. These results show that the dissipative construction improves both convergence speed and final variational accuracy, and that this improvement generally becomes stronger as the ancilla register is enlarged.

\begin{figure}[t]
    \centering
    \includegraphics[width=\linewidth]{Graph/NS8_FidelityAncilla.jpg}
    \caption{Fidelity recovery performance for different numbers of ancillas. Introducing dissipative ancillas significantly improves the fidelity beyond the initial input value, and increasing the number of ancillas generally leads to both faster recovery and a higher final fidelity.}
    \label{NS8_FidelityAncilla}
\end{figure}
The same qualitative gain appears in the state-recovery task. Fig.~\ref{NS8_FidelityAncilla} shows a clear ancilla-number dependence in the recovery dynamics. Once dissipative ancillas are introduced, the fidelity rises well above the input-state baseline, and larger values of $m$ generally produce both faster recovery and higher final fidelity. These results show that the dissipative construction is also beneficial in the recovery setting, and that this advantage becomes more pronounced as the ancilla register is enlarged.

Taken together, these results show that the proposed framework is useful in both of its main realizations: it improves the optimization trajectory in dissipative VQE and increases the achievable fidelity in dissipative state recovery.

\subsection{The framework remains robust and general}

We next examine whether these improvements remain visible beyond a single benchmark setting.

\begin{figure}[t]
    \centering
    \includegraphics[width=\linewidth]{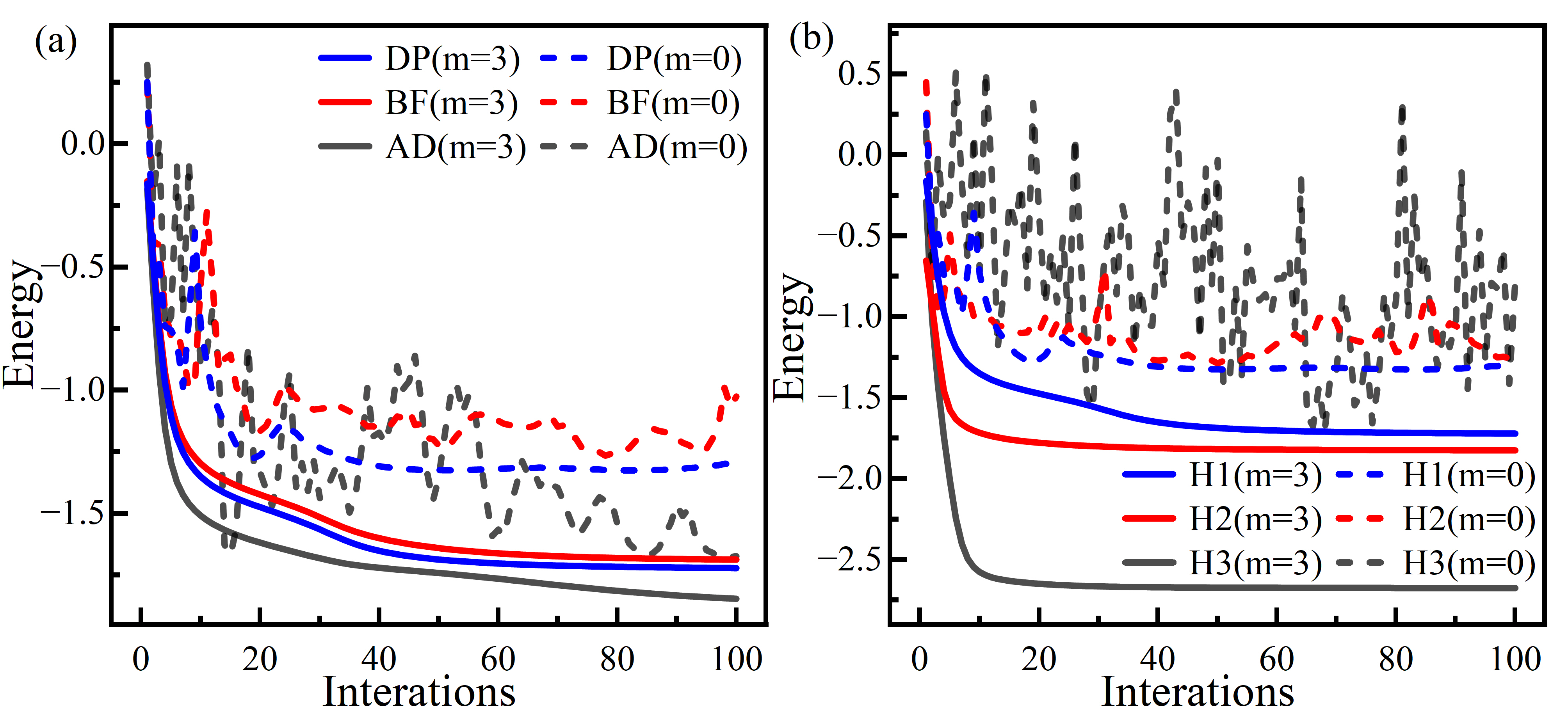}
    \caption{
    Robustness of the dissipative VQE across different noise channels and Hamiltonians.
    (a) Comparison under amplitude-damping, bit-flip, and depolarizing noise in the fully noisy setting.
    (b) Comparison for three representative Hamiltonians: an $XX$-type model in a transverse field, a transverse-field Ising model, and an $XXZ$-type model.
    In both cases, the dissipative VQE converges faster and reaches lower final energies than the ancilla-free baseline.
    }
    \label{NS5_VQENoiseTypesHamiltonians}
\end{figure}

For the VQE task, Fig.~\ref{NS5_VQENoiseTypesHamiltonians}(a) shows that the dissipative construction remains advantageous across qualitatively different noise channels. In the fully noisy setting considered here, noise acts not only on the main-system variational circuit but also on the dissipative blocks themselves. Under Amplitude Damping (AD), Bit-Flip (BF), and Depolarizing (DP) noise, the dissipative VQE still converges faster, fluctuates less, and reaches lower final energies than the ancilla-free baseline. This indicates that the optimized system-ancilla channel provides a net improvement even when the dissipative circuit itself is noisy. A complementary scan over the noise strength is shown in Appendix Fig.~\ref{NS3_VQENoiseStrength}, which leads to the same conclusion: the dissipative VQE maintains a smaller gap to the exact ground-state value as the noise strength increases.

The same qualitative advantage also persists across representative Hamiltonians. We consider
\begin{align}
    H_1 &=
    \sum_{i=1}^{n-1} X_i X_{i+1}
    +
    0.3\sum_{i=1}^{n} Z_i,
    \\
    H_2 &=
    \sum_{i=1}^{n-1} Z_i Z_{i+1}
    +
    0.3\sum_{i=1}^{n} X_i,
    \\
    H_3 &=
    \sum_{i=1}^{n-1} X_i X_{i+1}
    +
    \sum_{i=1}^{n-1} Y_i Y_{i+1}
    +
    0.3\sum_{i=1}^{n-1} Z_i Z_{i+1},
\label{VQE5}
\end{align}
where $H_1$ is an $XX$-type model in a transverse $Z$ field, $H_2$ is the transverse-field Ising model, and $H_3$ is an $XXZ$-type model~\cite{Lieb1961,Pfeuty1970,Sachdev2011,Baxter1982}. The data are obtained under the same fully noisy setting, with depolarizing noise of strength $p=0.01$. 

As shown in Fig.~\ref{NS5_VQENoiseTypesHamiltonians}(b), the dissipative VQE consistently converges faster, behaves more smoothly, and reaches final energies closer to the exact ground-state values than the ancilla-free baseline for all three Hamiltonians. Since these Hamiltonians have different interaction structures and therefore different minimum energy values, the consistent improvement suggests that the dissipative block is not only fitted to one specific energy landscape. 

The smoother training behavior observed in Fig.~\ref{NS5_VQENoiseTypesHamiltonians} can be understood from the optimization structure of the dissipative ansatz. Compared with the oscillation curve without ancillas, the dissipative model contains more trainable parameters, so under the same learning rate the parameter update is distributed over a larger parameter set and the energy change between neighboring iterations can become less abrupt. However, the improvement is not only a consequence of adding more parameters. Through the system-ancilla coupling followed by ancilla trace-out and reset, the dissipative block acts as a trainable channel on the main system, rather than merely as a deeper unitary circuit. This gives the optimizer an additional way to modify the system state during training. Therefore, the smoother curves should be viewed as a combined effect of the enlarged ansatz and the dissipative channel structure, while the lower final energies show that this dissipative channel is useful for finding a lower energy value.

The recovery task shows robustness across different preparation-noise channels. Fig.~\ref{NS10_FidelityNoiseStates}(a) focuses on a fixed dressed-cluster target state and compares different preparation-noise channels. The dissipative recovery protocol improves the final fidelity over the input-state baseline under DP, BP, and AD noise. This behavior is consistent with the intended role of the variational dissipative block: for each noise condition, the system-ancilla interaction is optimized to transform the noisy input state toward the same target state. The result therefore indicates that the protocol is not restricted to a single preparation-noise model.

\begin{figure}[ht]
    \centering
    \includegraphics[width=\linewidth]{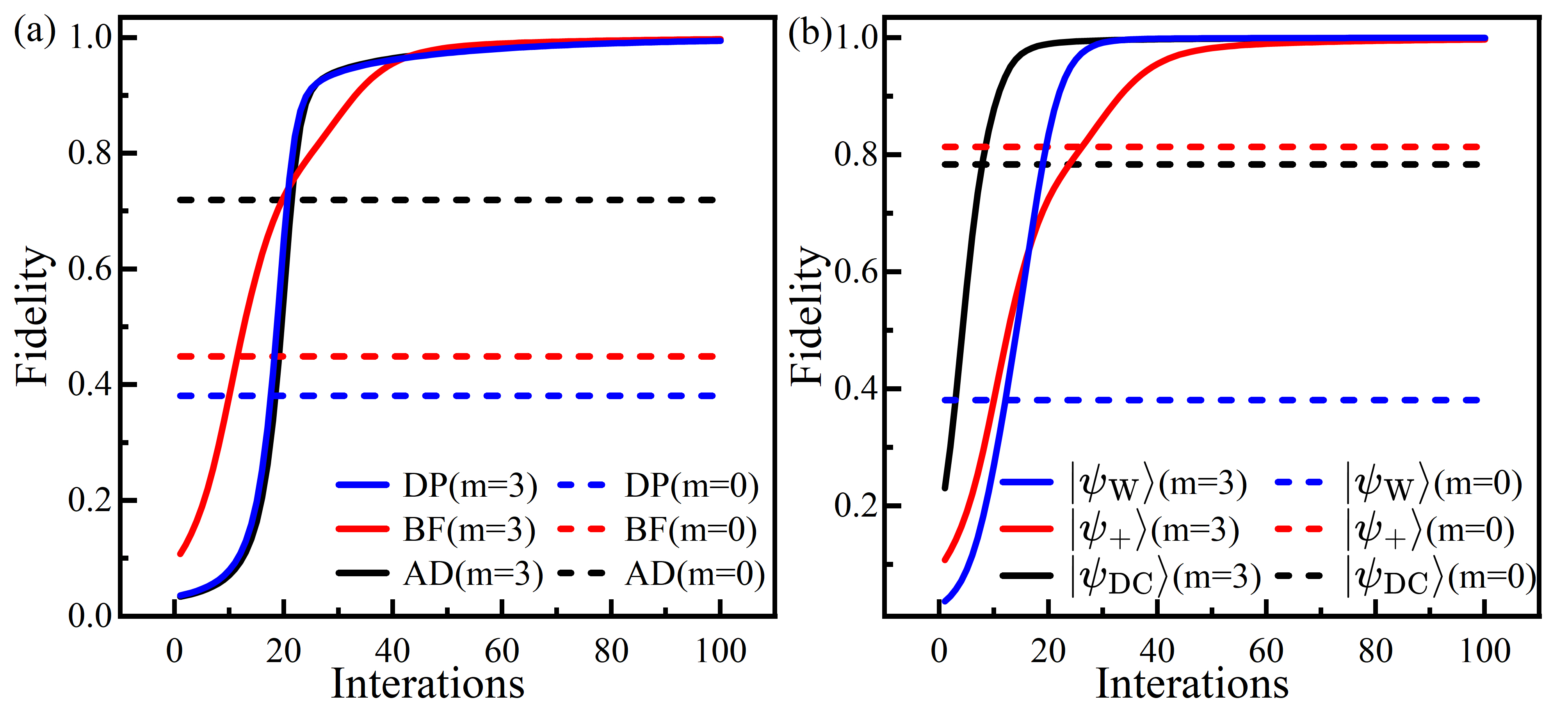}
    \caption{
    Fidelity recovery under different noise conditions and target states.Dashed lines denote the noisy input fidelities without dissipative recovery, while solid lines denote the results with variational dissipative recovery.
    (a) Recovery of a dressed cluster target state under depolarizing, bit-flip, and amplitude-damping noise.
    (b) Recovery performance for three representative target states: the $W$ state $\ket{\psi_W}$, the uniform superposition state $\ket{\psi_+}$, and the dressed cluster state $\ket{\psi_{\mathrm{DC}}}$.
    Unless otherwise specified, the simulations use $n=3$ system qubits, $m=3$ ancillas, learning rate $lr=0.8$, dissipative rounds $T=3$, and noise strength $p=0.1$.
    }
    \label{NS10_FidelityNoiseStates}
\end{figure}

The state-recovery task also shows the same kind of generality across target states. In the experiments of Fig.~\ref{NS10_FidelityNoiseStates}(b), we fix $m=3$, $T=3$, and $lr=0.8$, and prepare the input states under DP noise with strength $p=0.1$. For all three target states, the dissipative recovery model restores the final fidelity to above $99\%$ within 100 optimization iterations. These results show that the same dissipative recovery mechanism remains effective across qualitatively different target-state structures.

These results indicate that the improvement provided by trainable dissipation is not tied to a single benchmark example. The common mechanism behind these tests is that the ancilla-assisted block gives the variational protocol a trainable state-transformation step beyond system-only unitary evolution. In the VQE task, this additional channel-type update helps the optimization reach lower-energy states across different noisy settings and Hamiltonian landscapes. In the recovery task, the same structural idea allows the protocol to transform noisy input states toward the desired target under different preparation errors and for different target states. Therefore, the robustness observed here should be understood not as identical performance in every case, but as a consistent advantage of adding a trainable dissipative transformation to the variational workflow.

\subsection{The framework exhibits structured resource behavior}

Finally, we examine how the performance gain depends on the dissipative resources themselves. The numerical results suggest a structured resource behavior rather than an indefinite improvement with increasing dissipative complexity: moderate dissipative depth is typically sufficient, and the benefit of additional ancillas eventually saturates.

\begin{figure}[h]
    \centering
    \includegraphics[width=\linewidth]{Graph/NS4_Times_VQE_Fidelity.jpg}
    \caption{
    Dependence of the performance on the number of dissipative rounds $T$.
    (a) Energy optimization in the dissipative VQE task.
    (b) Fidelity recovery in the dissipative state-recovery task.
    The physical and hyperparameter settings are indicated in each panel and are chosen separately for the two tasks to illustrate the resource behavior in different settings.
    In both tasks, a moderate number of dissipative rounds, typically $T=3$, provides a good balance between convergence speed, final performance, and training overhead.
    }
    \label{NS4_Times_VQE_Fidelity}
\end{figure}

For the resource dependence on the number of dissipative rounds, Fig.~\ref{NS4_Times_VQE_Fidelity} compares representative scans in the VQE and state-recovery tasks. The two panels use different physical parameters, as indicated in the figure, because the goal is not to compare the two tasks directly but to show that a similar resource trend appears in different settings.

In the VQE task, Fig.~\ref{NS4_Times_VQE_Fidelity}(a) shows a non-monotonic dependence on $T$: too few dissipative rounds provide limited state reshaping, whereas too many rounds increase the circuit depth and the number of trainable parameters, which can introduce additional optimization overhead. In this example, $T=3$ gives the best overall energy convergence.

A similar tradeoff appears in the fidelity-recovery task in Fig.~\ref{NS4_Times_VQE_Fidelity}(b). Larger values such as $T=4$ or $T=5$ can slightly exceed the $T=3$ curve after about 55 iterations, but they require a substantially longer training time and show weaker early-stage convergence. Therefore, the small late-stage improvement does not necessarily justify the additional cost. Taken together, these results indicate that a moderate number of dissipative rounds already captures most of the available benefit. For this reason, we use $T=3$ in most of the remaining experiments.

\begin{figure}[ht]
    \centering
    \includegraphics[width=\linewidth]{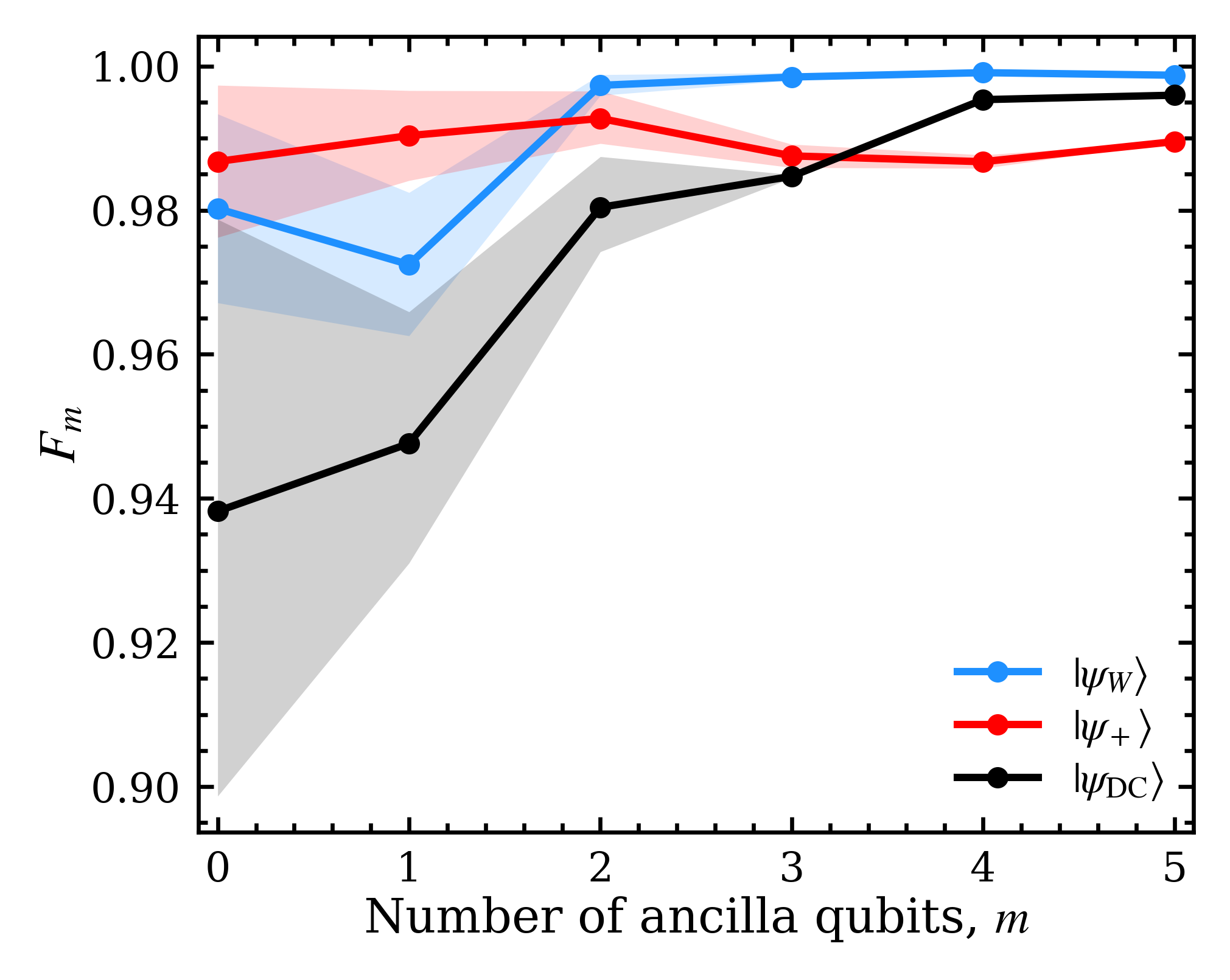}
    \caption{Final fidelity $F_m$ versus the number of ancillas for different noise strengths, showing state-dependent ancilla saturation and the emergence of a robustness plateau. For each target state, the solid line denotes the central trend, while the shaded region indicates the envelope bounded by the highest- and lowest-fidelity curves over different noise strengths.}
    \label{NS11_FidelitySaturation}
\end{figure}


A more detailed resource effect appears in the state-recovery task when varying the number of ancillas.
Fig.~\ref{NS11_FidelitySaturation} shows that the final fidelity $F_m$ approaches a plateau as the number of ancillas increases. More importantly, the separation between curves with different noise strengths becomes much smaller after the plateau is reached. This indicates not only an ancilla-saturation effect, but also a reduction of noise sensitivity.

This behavior can be understood from the channel perspective. With too few ancillas, the dissipative module has only a limited number of effective Kraus degrees of freedom, and therefore can correct or suppress only a restricted subset of the noise-induced error components. In this regime, stronger preparation noise leaves a larger residual component outside the recoverable subspace, so the final fidelity remains sensitive to the noise strength. Increasing the number of ancillas enlarges the effective channel space and allows the optimized system-ancilla interaction to capture more of the dominant error modes. Once these dominant recoverable modes are covered, the learned channel can map noisy inputs generated at different noise strengths toward a similar high-fidelity region near the target state. The remaining errors are either weak or incompatible with the fixed dissipative ansatz, so further ancillas provide only marginal additional improvement. We refer to this empirical crossover as \textit{ancilla saturation}. 


The effective saturation value is state dependent. For the three target states considered in Fig.~\ref{NS11_FidelitySaturation}, the approximate saturation values are $m=2$ for $\ket{\psi_W}$, $m=5$ for $\ket{\psi_{+}}$, and $m=3$ for $\ket{\psi_{\mathrm{DC}}}$. This suggests that the required ancilla resources are not determined solely by the amount of entanglement in the target state. Instead, they depend on the combined structure of the preparation circuit, the noise-induced error modes, and the expressive compatibility between these error modes and the chosen dissipative ansatz.

A complementary view is provided by the fidelity gain shown in Appendix Fig.~\ref{NS12_FidelityThreshold} for the $W$-state recovery task. Once $m\geq 2$, the gain curves for different ancilla numbers almost overlap, showing that the dominant recoverable error modes have already been captured and that further ancillas do not provide appreciable additional benefit.

These results suggest that the benefit of trainable dissipation is not obtained by arbitrarily increasing circuit complexity. Instead, the numerical behavior exhibits effective resource thresholds and saturation, indicating that the useful dissipative degrees of freedom are finite and task dependent in the present framework.

\subsection{Summary of the numerical results}

Across all numerical experiments, a common pattern emerges: the ancilla-assisted dissipative module acts as an effective trainable channel layer that reshapes the reduced system state in a task-dependent but consistently useful way. In the ground-state-search setting, this reshaping accelerates the variational descent, improves the final energy accuracy, and enhances robustness under noise. In the state-recovery setting, it improves the overlap with the target state and provides a controllable route for suppressing preparation errors. The main numerical message is therefore not only that the framework works in two separate tasks, but that the same dissipative mechanism can serve as a reusable variational resource across different objectives.

\section{Conclusion}

In this work, we introduced a variational dissipative framework in which system-only variational layers are combined with ancilla-assisted trainable dissipation. The central idea is to use a common dissipative block, implemented through parameterized system-ancilla couplings followed by ancilla trace-out and reset, so that coherent variational evolution and controllable nonunitary state transformations can be incorporated within the same circuit architecture.

Within this framework, we studied two representative tasks: ground-state search and state recovery. In the ground-state-search setting, the dissipative construction improves the optimization trajectory, lowers the final converged energy, and remains effective across different noise conditions and benchmark Hamiltonians. In the state-recovery setting, the same dissipative mechanism improves the fidelity of imperfect input states with respect to the target state and remains effective across target states with different structures. Taken together, these results show that the ancilla-assisted dissipative block is not tied to a single objective, but can serve as a reusable trainable component across different variational tasks.

Our numerical results also reveal a more structured picture of the dissipative resources themselves. The gain from dissipation is not obtained simply by increasing circuit complexity without bound. Instead, the framework already performs well with a moderate number of dissipative rounds, while the ancilla requirement in the recovery task exhibits a clear task-dependent saturation behavior. This suggests that the useful dissipative degrees of freedom are finite and can be meaningfully optimized within a practical variational setting.

Overall, our results support a broader view of engineered dissipation in variational quantum computing. Rather than treating nonunitary dynamics only as an unwanted effect, one can incorporate ancilla-assisted dissipation directly into the variational design and use it as a trainable resource. Looking ahead, it will be interesting to understand more systematically how this framework scales with system size, how its resource requirements depend on the target task and error model, and how similar dissipative modules can be incorporated into other variational quantum algorithms.

\section*{Acknowledgments}
We thank Ying Li for helpful discussions. This work is supported by 
Quantum Science and Technology-National Science and Technology Major Project (2023ZD0300200), Beijing Natural Science Foundation Z250004, NSAF (Grant No.~U2330201), the National Natural Science Foundation of China Grant (No.~12361161602), Beijing Science and Technology Planning Project (Grant No.~Z25110100810000), and the High-performance Computing Platform of Peking University.

\bibliographystyle{apsrev4-2}
\bibliography{references}

\clearpage
\newpage
\onecolumngrid
\appendix

\section{Additional numerical results}
\label{app:numerics}
This section collects supplementary numerical results that support the main conclusions of Sec.~\ref{sec:numerics}. These figures do not introduce new qualitative claims, but provide additional evidence for the robustness and resource behavior of the proposed variational dissipative framework.

\subsection{Dissipative models}

Fig.~\ref{NS1_DVQE} shows the circuit-level realization of the dissipative VQE model used in our simulations. The yellow blocks denote the system-only variational layers, while the green blocks denote the dissipative modules implemented by system-ancilla couplings followed by ancilla trace-out and reset. In the multi-round construction, a system-only variational block is inserted before each dissipative block, so that the circuit can alternate between coherent exploration of the variational manifold and dissipative reshaping of the system state.
\begin{figure}[H]
    \centering
    \includegraphics[width=.9\linewidth]{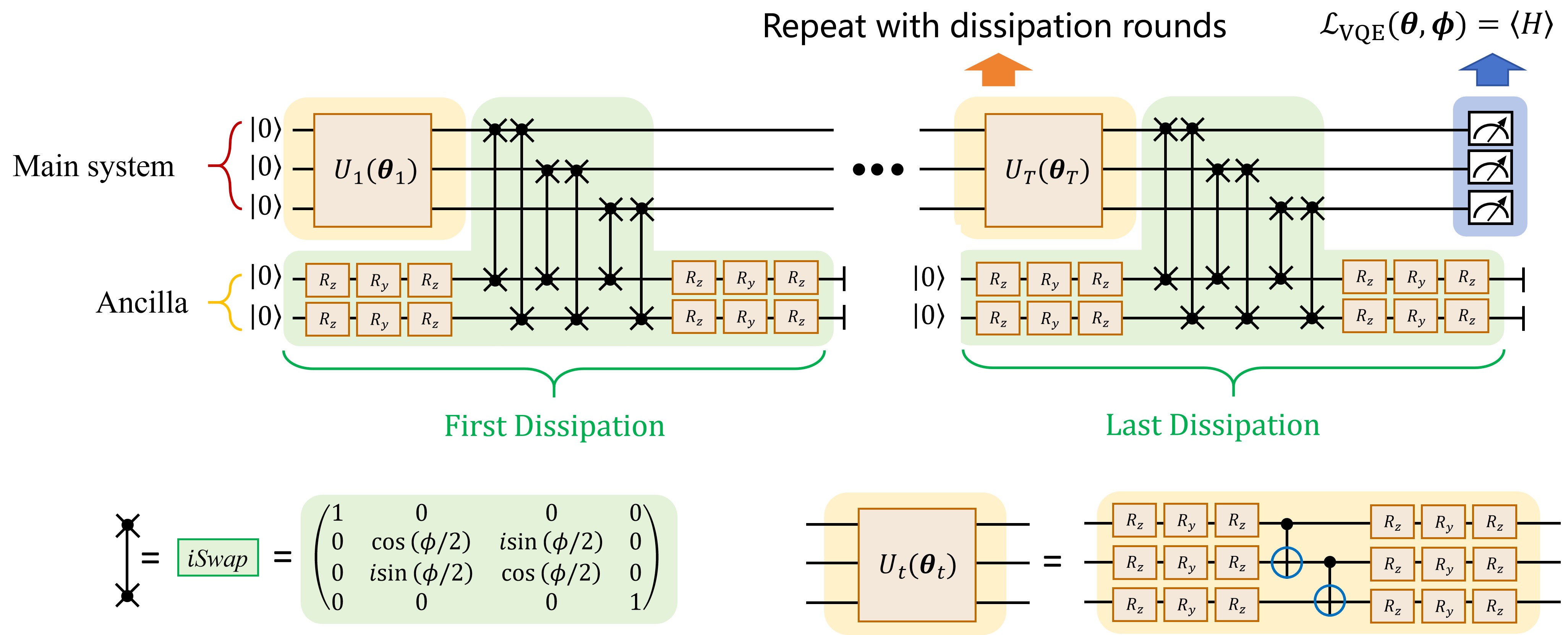}
    \caption{Architecture of the dissipative VQE model. The yellow blocks denote the system-only variational circuit, while the green blocks denote the dissipative modules implemented by system-ancilla couplings followed by ancilla trace-out and reset.}
    \label{NS1_DVQE}
\end{figure}

\begin{figure}[H]
    \centering
    \includegraphics[width=.9\linewidth]{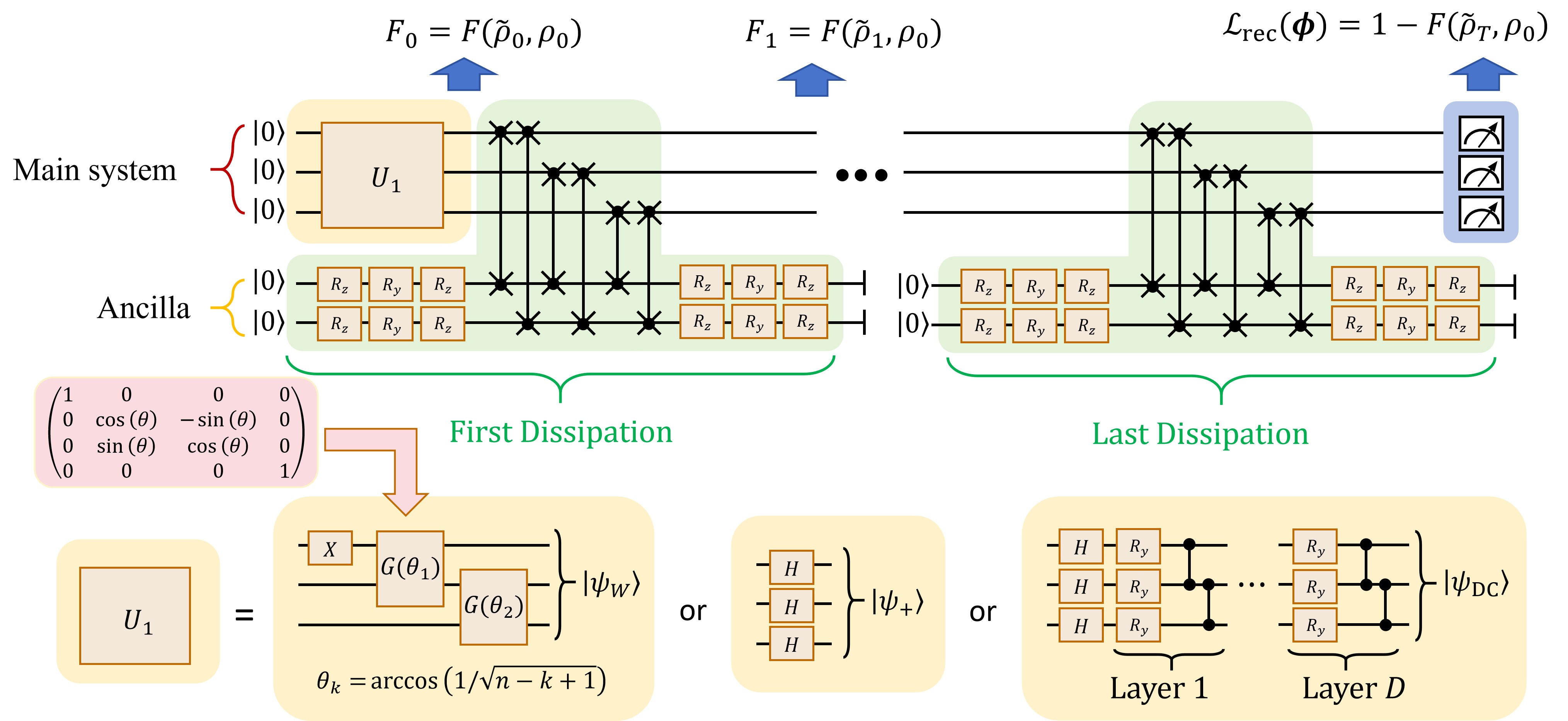}
    \caption{Architecture of the dissipative recovery model. A low-fidelity input state is fed into repeated dissipative rounds so that the reduced state of the main system approaches the target state.}
    \label{NS7_FidelitySchematic}
\end{figure}

In our implementation, the dissipative block is built from parameterized single-qubit rotations together with system-ancilla $i$SWAP couplings. Specifically, each round applies trainable local rotations to the involved qubits and then couples every system qubit to every ancilla qubit through an $i$SWAP gate. This parameterization allows the dissipative block to realize a flexible trainable system-ancilla interaction, whose reduced action on the main system defines the effective recovery channel.
Fig.~\ref{NS7_FidelitySchematic} shows the circuit-level realization of the dissipative recovery model used in our simulations. In contrast to the dissipative VQE architecture, the system-only variational layers are removed, and the input state is fed directly into repeated dissipative rounds. In each round, fresh ancillas initialized in $\ket{0}^{\otimes m}$ are coupled to the main system through the trainable dissipative block, after which the ancillas are traced out and reset. The repeated application of this ancilla-assisted channel allows the reduced state of the main system to be progressively reshaped toward the target state.

\subsection{Performance under different noise level}
Here, we provide the related extened numerical experiments of comparison of VQE and noise cancellation performance under different noise level. Fig.~\ref{NS3_VQENoiseStrength} complements the noise-channel comparison in the main text by showing the dependence on the noise strength in the VQE task. Panel (a) shows the standard VQE optimization, while panel (b) shows the dissipative VQE with five ancillas. Under the same noise conditions, the dissipative framework stays closer to the exact ground-state energy, indicating improved robustness against increasing noise strength. Fig.~\ref{NS12_FidelityThreshold} gives a complementary view of ancilla saturation through the fidelity gain in the $W$-state recovery task. The results show that the improvement is already maximized at about $m=2$, and that increasing the number of ancillas further does not produce a noticeable additional gain. In this sense, the recovery performance is not improved monotonically by enlarging the ancilla register, but instead exhibits a clear saturation behavior.

\begin{figure}[H]
    \centering
    \includegraphics[width=.9\linewidth]{Graph/NS3_VQENoiseStrength.jpg}
    \caption{Comparison of VQE performance under different noise strengths. Panel (a) shows the standard ancilla-free VQE, for which the gap between the converged energy and the exact ground-state value grows with noise strength. Panel (b) shows the dissipative VQE, where the convergence remains faster and the final energies stay closer to the exact value, indicating enhanced robustness against noise.}
    \label{NS3_VQENoiseStrength}
\end{figure}

\begin{figure}[H]
    \centering
    \includegraphics[width=\linewidth]{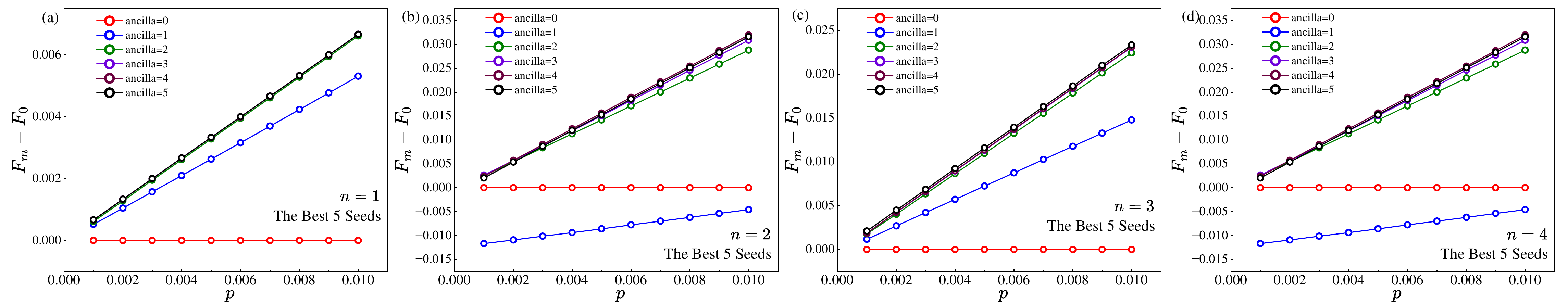}
    \caption{Fidelity gain $F_m-F_0$ as a function of the noise strength for $W$-state recovery, illustrating the effective ancilla saturation threshold $m=2$.}
    \label{NS12_FidelityThreshold}
\end{figure}

\section{Performance analysis of original VQE and dissipative VQE}
\label{sec:analysis}
\subsection{Loss functions for the original and dissipative VQAs}
Here, we consider the original VQA which prepares the ground state by using a parameterized unitary circuit, whereas the dissipative VQA through a parameterized noisy channel with periodic non-unital reset operations, as in Ref~\cite{zapusek2025scaling}.

\begin{definition}[Original VQA]
Given a Hamitlonian H, let $\rho_0$ be the input state and let $U(\theta_{\uni})$ be a parameterized unitary ansatz constructed via a product of parameterized gates of the form $U(\theta_{\uni})=\prod_j \exp(-i\theta_{\uni} H_j)$. The parametrized state of the original VQA is
\begin{equation}
  \rho_{\uni}(\theta_{\uni})
  :=
  U(\theta_{\uni})\rho_0 U(\theta_{\uni})^\dagger .
  \label{eq:unitary-state}
\end{equation}
The corresponding cost function is
\begin{equation}
  C_{\uni}(\theta_{\uni})
  :=
  \Tr\!\left[O\rho_{\uni}(\theta_{\uni})\right]
  =
\Tr\!\left[O\,U(\theta_{\uni})\rho_0U(\theta_{\uni})^\dagger\right].
  \label{eq:unitary-cost}
\end{equation}
\end{definition}
If unital noise is included, the preparation map may be replaced by a noisy unital channel applied to the unitary circuit, this modification can lead to noise-induced barren plateaus~\cite{wang2021noise}.

\begin{definition}[Dissipative VQA]
Given a Hamitlonian $H$ and input state $\rho_0$, the dissipative VQA utilizes reset-assisted non-unital quantum channel~\cite{zapusek2025scaling}. Concretely, let $\mathcal U^k_{\theta_{\dis}}$ denote the parameterized two-qubit layer at layer $k$, $\mathcal V^k$ a layer of fixed single-qubit gates, and $\mathcal N$ a tensor product of contractive unital single-qubit noise channels. The protocol contains reset channel $A_{n_r}$ expressed as $\mathcal A_{n_r}(\rho)=(1-q)\rho+q\,\Bigl(\Tr_R[\rho]\otimes |0\rangle\!\langle 0|_R\Bigr)$, which are interleaved with $L$ layers of noisy parameterized evolution $\{\mathcal V^k\circ \mathcal N\circ \mathcal U^k_{\theta_{\dis}}\}_{k=1}^L$.
With $\mathcal A_{n_r}^0=\Id$, the dissipative channel is
\begin{equation}
  \Phi_{\theta_{\dis}}^M
  :={\prod_{k=1}^{L}}
  \left(
    \mathcal V^k\circ \mathcal N\circ \mathcal U^k_{\theta_{\dis}}
    \circ \mathcal A_{n_r}^{\chi(k)}
  \right),
  \label{eq:dissipative-channel}
\end{equation}
where 
\begin{equation}
  \chi(k)=
  \begin{cases}
    1, & k\equiv 0 \pmod L,\\
    0, & \text{otherwise}.
  \end{cases}
\end{equation}
The variational state generated by dissipative VQA can be written as
\begin{equation}
  \rho_{\dis}(\theta_{\dis})
  :=
  \Phi_{\theta_{\dis}}^M(\rho_0),
  \label{eq:dissipative-state}
\end{equation}
and the corresponding cost function is
\begin{equation}
  C_{\dis}(\theta_{\dis})
  :=
  \Tr\!\left[O\rho_{\dis}(\theta_{\dis})\right]
  =
  \Tr\!\left[O\,\Phi_{\theta_{\dis}}^M(\rho_0)\right].
  \label{eq:dissipative-cost}
\end{equation}
\end{definition}

The reset channel $\mathcal A_{n_r}$ resets the qubits to $|0\rangle$ with probability $q\in(0,1]$, and acts as the identity with probability $1-q$. Here, we mainly consider the case of $q=1$. The key distinction from \eqref{eq:unitary-cost} is the non-unital reset-assisted map $\Phi_{\theta_{\dis}}^M$, which can remove entropy during the computation and is the mechanism behind the dissipative barren-plateau avoidance result of Ref.~\cite{zapusek2025scaling}.

\subsection{Optimization process and related assumptions}
Consider the optimization dynamics at step $t$ for both the original VQA and dissipative VQA under gradient descent, which can be written in the following form:
\begin{equation}
  \theta_{A,t+1}
  =
  \theta_{A,t}-\eta_A\nabla C_A(\theta_{A,t}).
\end{equation}
where $A\in\{\uni,\dis\}$ refers to original VQA and dissipative VQA , respectively.
Define the one-step decrease and the optimality gap by
\begin{equation}
  \Delta C_{A,t}
  :=
  C_A(\theta_{A,t})-C_A(\theta_{A,t+1}),
  \qquad
  \gap_{A,t}:=C_A(\theta_{A,t})-C_A^*,
  \qquad
  C_A^*:=\inf_\theta C_A(\theta).
\end{equation}
The global comparison concerns how fast each method reduces the corresponding gap $\gap_{A,t}$.

\begin{assumption}[Smoothness]\label{asm:smooth}
For $A\in\{\uni,\dis\}$ and $\forall \theta,\phi\in \mathbb{R}^p$, the cost $C_A$ is $\beta_A$-smooth, that is
\begin{equation}
  \norm{\nabla C_A(\theta)-\nabla C_A(\phi)}_2
  \le
  \beta_A\norm{\theta-\phi}_2.
\end{equation}
For gradient descent, it uses $0<\eta_A\le 1/\beta_A$.
\end{assumption}
Equivalently $\norm{\nabla^2 C_A(\theta)}_2\leq\beta_A$ for all $\theta$.
Gradient descent uses step size $0<\eta_A\leq 1/\beta_A$.
We additionally assume $\beta_A = O(\poly(n))$, which holds for local
Hamiltonians with $O$-norm $\norm{O}_\infty=O(\poly(n))$.

\begin{assumption}[Two-design]
\label{asm:two_design}
At initialisation $\theta_{A,0}$, the parameterised unitaries are drawn
from a distribution forming a unitary $2$-design on each gate.
In particular, $\E[\partial_\mu C_A(\theta_{A,0})]=0$ for all $\mu$,
so that $\E[(\partial_\mu C_A)^2]=\Var[\partial_\mu C_A]$ at $t=0$.
\end{assumption}

\begin{assumption}[PL condition for the dissipative VQA ]
\label{asm:PL}
There exists $\mu_{\mathrm{dis}}>0$ such that, for all $\theta$,
\begin{equation}
    \tfrac{1}{2}\norm{\nabla C_{\mathrm{dis}}(\theta)}_2^2
    \;\geq\;
    \mu_{\mathrm{dis}}\bigl(C_{\mathrm{dis}}(\theta)-C_{\mathrm{dis}}^*\bigr).
    \label{eq:PL}
\end{equation}
\end{assumption}

\begin{assumption}[Trajectory-uniform flatness of the original VQA]
\label{asm:flat}
There exist constants $B_{\mathrm{uni}}>0$ and $b>1$ such that, for all
$t\geq 0$ until the expected gap reaches a target $\varepsilon$,
\begin{equation}
    \E\bigl[\norm{\nabla C_{\mathrm{uni}}(\theta_{\mathrm{uni},t})}_2^2\bigr]
    \;\leq\; G_{\mathrm{uni}}(n)
    \;:=\; m\,B_{\mathrm{uni}}\,b^{-n},
    \label{eq:flat}
\end{equation}
where $m$ is the number of variational parameters.
\end{assumption}

\subsection{Conditional global convergence and separation}

In this section, we first establish the descent tools needed for the global convergence analysis. Lemmas~\ref{lem:lower-descent} and~\ref{lem:pl-contraction} are used to prove Theorem~\ref{thm:diss-global}, which gives a conditional global convergence guarantee for the dissipative VQA under a PL-type assumption. We then use Lemmas~\ref{lem:upper-descent} and~\ref{lem:unitary-step-lower} to derive a lower bound on the number of gradient-descent steps required by a trajectory-flat original VQA. Finally, Theorem~\ref{thm:global-separation} combines these two results to establish a conditional separation in iteration complexity between the dissipative and original VQAs.

\begin{lemma}[Lower bound of per-step descent]
\label{lem:lower-descent}
Let \(f:\mathbb R^p\to\mathbb R\) be \(\beta\)-smooth and let \(x^+=x-\eta\nabla f(x)\) with \(0<\eta\le1/\beta\).  Then
\begin{equation}
  f(x)-f(x^+)
  \ge
  \frac{\eta}{2}\norm{\nabla f(x)}_2^2 .
  \label{eq:lower-descent}
\end{equation}
\end{lemma}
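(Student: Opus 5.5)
The plan is to use the standard descent lemma for $\beta$-smooth functions, which follows from Assumption~\ref{asm:smooth}-type Lipschitz continuity of the gradient. First I will establish the quadratic upper bound
\begin{equation}
  f(y)\le f(x)+\langle \nabla f(x),y-x\rangle+\frac{\beta}{2}\norm{y-x}_2^2
  \qquad \forall x,y\in\mathbb R^p .
\end{equation}
To do so, I write $f(y)-f(x)=\int_0^1\langle \nabla f(x+s(y-x)),y-x\rangle\,ds$ by the fundamental theorem of calculus along the segment, which is valid since $f$ is continuously differentiable. Next I subtract $\langle\nabla f(x),y-x\rangle$ and bound the remaining integrand by Cauchy--Schwarz together with the Lipschitz property $\norm{\nabla f(x+s(y-x))-\nabla f(x)}_2\le \beta s\norm{y-x}_2$. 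Integrating $\beta s$ over $[0,1]$ then gives the factor $\beta/2$.

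Second, I substitute $y=x^+=x-\eta\nabla f(x)$. This gives
\begin{equation}
  f(x^+)\le f(x)-\eta\norm{\nabla f(x)}_2^2+\frac{\beta\eta^2}{2}\norm{\nabla f(x)}_2^2
  = f(x)-\eta\Bigl(1-\frac{\beta\eta}{2}\Bigr)\norm{\nabla f(x)}_2^2 .
\end{equation}
Finally, I use the step-size restriction $0<\eta\le 1/\beta$. It gives $\beta\eta/2\le 1/2$, so $1-\beta\eta/2\ge 1/2$. Rearranging then yields $f(x)-f(x^+)\ge \frac{\eta}{2}\norm{\nabla f(x)}_2^2$, which is Eq.~\eqref{eq:lower-descent}.

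The only step needing care is the integral form of the quadratic upper bound, since everything afterwards is algebra. If one prefers to avoid integrals, an alternative is to use the equivalent Hessian bound $\norm{\nabla^2 f}_2\le\beta$ noted after Assumption~\ref{asm:smooth}, together with Taylor's theorem with mean-value remainder. This route assumes $f\in C^2$, which holds for the parameterized cost functions here, since they are trigonometric polynomials in $\theta$.
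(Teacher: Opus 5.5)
Your proof is correct and takes the same approach as the paper. The paper applies the quadratic upper bound from $\beta$-smoothness at $y=x^+$ and uses $\eta\le 1/\beta$ to get $\eta(1-\beta\eta/2)\ge\eta/2$; it simply invokes that bound rather than deriving it by the integral argument you give, which is the right route here because your Hessian-based alternative would need $f\in C^2$, an assumption the lemma as stated does not make.
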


\begin{proof}
By \(\beta\)-smoothness,
\begin{equation}
  f(x^+)
  \le
  f(x)-\eta\norm{\nabla f(x)}_2^2
  +\frac{\beta\eta^2}{2}\norm{\nabla f(x)}_2^2.
\end{equation}
Since \(\eta\le1/\beta\), we have \(\eta(1-\beta\eta/2)\ge\eta/2\).  Rearranging gives the claim.
\end{proof}

\begin{lemma}[Upper bound of per-step descent]
\label{lem:upper-descent}
Assume \(f\) is twice differentiable and \(\norm{\nabla^2 f(z)}_2\le\beta\) on the segment between \(x\) and \(x^+=x-\eta\nabla f(x)\).  If \(0<\eta\le1/\beta\), then
\begin{equation}
  f(x)-f(x^+)
  \le
  \frac{3\eta}{2}\norm{\nabla f(x)}_2^2 .
  \label{eq:upper-descent}
\end{equation}
\end{lemma}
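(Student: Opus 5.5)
The plan is to use a second-order Taylor expansion of $f$ along the segment from $x$ to $x^+$, with the Hessian controlled by the hypothesis $\norm{\nabla^2 f(z)}_2\le\beta$. Write $g:=\nabla f(x)$ and $x^+=x-\eta g$. I will parametrize the segment as $\phi(s):=f(x-s\eta g)$ for $s\in[0,1]$. Since $f$ is twice differentiable, $\phi$ is twice differentiable with $\phi'(0)=-\eta\norm{g}_2^2$ and $\phi''(s)=\eta^2 g^\top\nabla^2 f(x-s\eta g)\,g$.

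Next I bound $\phi''$ from below. The spectral norm bound on the segment gives $\phi''(s)\ge-\beta\eta^2\norm{g}_2^2$. By Taylor's theorem with Lagrange remainder, there is some $\xi\in(0,1)$ with $\phi(1)=\phi(0)+\phi'(0)+\tfrac12\phi''(\xi)$. This yields
\begin{equation}
  f(x^+)\ \ge\ f(x)-\eta\norm{g}_2^2-\frac{\beta\eta^2}{2}\norm{g}_2^2 .
\end{equation}
If only differentiability of $\nabla f$ along the segment is available, the integral form $\phi(1)-\phi(0)-\phi'(0)=\int_0^1(1-s)\phi''(s)\,ds$ gives the same bound.

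Rearranging gives $f(x)-f(x^+)\le \eta\bigl(1+\beta\eta/2\bigr)\norm{g}_2^2$. The step-size condition $\eta\le1/\beta$ gives $\beta\eta/2\le 1/2$, so the coefficient is at most $3\eta/2$. This is exactly \eqref{eq:upper-descent}.

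The only delicate point is that the Hessian bound is assumed only on the segment, not globally. For that reason I apply the Taylor or mean-value argument to the one-dimensional function $\phi$, which involves only points on the segment, rather than quoting the global $\beta$-smoothness inequality used in Lemma~\ref{lem:lower-descent}. Otherwise the argument mirrors that of Lemma~\ref{lem:lower-descent}, with the sign of the quadratic term reversed.
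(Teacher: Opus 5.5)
Your proof is correct and follows the same route as the paper: the Hessian bound gives the lower Taylor estimate $f(x^+)\ge f(x)-\eta\norm{\nabla f(x)}_2^2-\frac{\beta\eta^2}{2}\norm{\nabla f(x)}_2^2$, and $\eta(1+\beta\eta/2)\le 3\eta/2$ then follows from $\eta\le 1/\beta$. Your reduction to the one-dimensional function $\phi(s)=f(x-s\eta g)$ spells out the step the paper states in one line, and correctly uses the Hessian bound only on the segment.
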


\begin{proof}
The lower Taylor estimate implied by \(\norm{\nabla^2 f(z)}_2\le\beta\) gives
\begin{equation}
  f(x^+)
  \ge
  f(x)-\eta\norm{\nabla f(x)}_2^2
  -\frac{\beta\eta^2}{2}\norm{\nabla f(x)}_2^2.
\end{equation}
Thus
\begin{equation}
  f(x)-f(x^+)
  \le
  \eta\left(1+\frac{\beta\eta}{2}\right)\norm{\nabla f(x)}_2^2
  \le
  \frac{3\eta}{2}\norm{\nabla f(x)}_2^2,
\end{equation}
because \(\eta\le1/\beta\).
\end{proof}

\begin{lemma}[PL contraction]
\label{lem:pl-contraction}
Suppose \(f\) is \(\beta\)-smooth and satisfies the PL inequality
\begin{equation}
  \frac12\norm{\nabla f(x)}_2^2
  \ge
  \mu\bigl(f(x)-f^*\bigr)
  \qquad\text{for all }x.
\end{equation}
Then gradient descent with \(0<\eta\le1/\beta\) satisfies
\begin{equation}
  f(x_t)-f^*
  \le
  e^{-\eta\mu t}\bigl(f(x_0)-f^*\bigr).
  \label{eq:pl-rate}
\end{equation}
\end{lemma}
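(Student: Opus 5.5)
The plan is the standard linear-convergence argument for gradient descent under a Polyak--\L{}ojasiewicz inequality. It chains the per-step descent bound of Lemma~\ref{lem:lower-descent} with the PL inequality and then iterates the resulting one-step contraction.

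\textbf{One-step contraction.} Fix an iterate $x_t$ and set $x_{t+1}=x_t-\eta\nabla f(x_t)$ with $0<\eta\le 1/\beta$. By Lemma~\ref{lem:lower-descent}, $f(x_t)-f(x_{t+1})\ge \frac{\eta}{2}\norm{\nabla f(x_t)}_2^2$. The PL inequality gives $\frac12\norm{\nabla f(x_t)}_2^2\ge \mu\bigl(f(x_t)-f^*\bigr)$, and substituting it yields
\begin{equation}
  f(x_t)-f(x_{t+1}) \ge \eta\mu\bigl(f(x_t)-f^*\bigr).
\end{equation}
Subtracting $f^*$ from both sides and rearranging gives the contraction
\begin{equation}
  f(x_{t+1})-f^*\le (1-\eta\mu)\bigl(f(x_t)-f^*\bigr).
\end{equation}
This factor is a genuine contraction coefficient, i.e.\ $1-\eta\mu\in[0,1)$. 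To see this, note that $\beta$-smoothness and PL together force $\mu\le\beta$ whenever $f$ is not constant. Indeed, the standard bound $f(x)-f^*\ge \frac{1}{2\beta}\norm{\nabla f(x)}_2^2$, obtained by taking one gradient step from $x$ and applying the smoothness upper bound, combined with PL gives $\mu\le\beta$. Hence $\eta\mu\le 1$. Even without this remark, the inequality is harmless, since the gap $f(x_{t+1})-f^*$ is nonnegative and so is bounded below by zero.

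\textbf{Iteration and exponential bound.} Applying the contraction inductively for $t$ steps gives $f(x_t)-f^*\le(1-\eta\mu)^t\bigl(f(x_0)-f^*\bigr)$. The elementary inequality $1-s\le e^{-s}$ for $s\ge0$, applied with $s=\eta\mu$, then gives $(1-\eta\mu)^t\le e^{-\eta\mu t}$, which is exactly Eq.~\eqref{eq:pl-rate}.

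\textbf{Main obstacle.} There is no substantive obstacle. The only point requiring care is that $f^*=\inf f$ is finite, which the PL statement implicitly assumes, so that the gaps are well defined and nonnegative. Given that, the sign of $1-\eta\mu$ is handled by the remark above, and the remaining steps are routine substitution.
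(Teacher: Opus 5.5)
Your proposal is correct and matches the paper's proof. Both arguments combine Lemma~\ref{lem:lower-descent} with the PL inequality to get $f(x_{t+1})-f^*\le(1-\eta\mu)(f(x_t)-f^*)$, bound $1-\eta\mu\le e^{-\eta\mu}$, and iterate; the paper makes this replacement at each step using only that the gap is nonnegative, so your correct remark that $\mu\le\beta$ is not needed there.
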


\begin{proof}
Lemma~\ref{lem:lower-descent} gives
\begin{equation}
  f(x_{t+1})-f^*
  \le
  f(x_t)-f^*-\frac{\eta}{2}\norm{\nabla f(x_t)}_2^2.
\end{equation}
Using the PL inequality,
\begin{equation}
  f(x_{t+1})-f^*
  \le
  (1-\eta\mu)\bigl(f(x_t)-f^*\bigr)
  \le
  e^{-\eta\mu}\bigl(f(x_t)-f^*\bigr).
\end{equation}
Iterating proves \eqref{eq:pl-rate}.
\end{proof}

\begin{lemma}[Trajectory-flat original VQA requires many steps]
\label{lem:unitary-step-lower}
Assume Assumptions~\ref{asm:smooth} and \ref{asm:flat}.  Suppose \(\E[\gap_{\uni,0}]\ge\delta_0>0\).  If \(\E[\gap_{\uni,T}]\le\varepsilon<\delta_0\), then
\begin{equation}
  T
  \ge
  \frac{2(\delta_0-\varepsilon)}{3\eta_{\uni}G_{\uni}(n)}
  =
  \Omega\!\left(\frac{b^n}{\eta_{\uni}m}\right),
  \label{eq:unitary-step-lower}
\end{equation}
\end{lemma}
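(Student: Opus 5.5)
The approach is to telescope the per-step upper descent bound of Lemma~\ref{lem:upper-descent} along the original-VQA trajectory, take expectations, and bound each step's gradient term using the trajectory-uniform flatness of Assumption~\ref{asm:flat}. Since the cost drops by at most a term proportional to $\eta_{\uni}$ times the squared gradient norm per step, and that squared norm is exponentially small on average, the total expected decrease after $T$ steps is at most linear in $T$ with an exponentially small slope. Reaching the target then forces $T$ to be exponentially large.

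\textbf{Step 1: per-step bound.} Assumption~\ref{asm:smooth} gives $\norm{\nabla^2 C_{\uni}}_2\le\beta_{\uni}$ everywhere, so in particular on every segment between consecutive iterates, and $\eta_{\uni}\le 1/\beta_{\uni}$. Applying Lemma~\ref{lem:upper-descent} with $f=C_{\uni}$ and $x=\theta_{\uni,t}$ yields
\begin{equation}
  \gap_{\uni,t}-\gap_{\uni,t+1}
  =\Delta C_{\uni,t}
  \le\frac{3\eta_{\uni}}{2}\norm{\nabla C_{\uni}(\theta_{\uni,t})}_2^2 .
\end{equation}
Here I use that $C_{\uni}^*$ is a constant, so the difference of gaps equals the difference of costs.

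\textbf{Step 2: sum and take expectations.} Summing over $t=0,\dots,T-1$ and taking expectations over the random initialization gives
\begin{equation}
\E[\gap_{\uni,0}]-\E[\gap_{\uni,T}]\le\frac{3\eta_{\uni}}{2}\sum_{t=0}^{T-1}\E\bigl[\norm{\nabla C_{\uni}(\theta_{\uni,t})}_2^2\bigr].
\end{equation}
Each summand is then bounded by $G_{\uni}(n)$ via Assumption~\ref{asm:flat}.

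\textbf{Step 3: conclude.} Using $\E[\gap_{\uni,0}]\ge\delta_0$ and $\E[\gap_{\uni,T}]\le\varepsilon$, we obtain $\delta_0-\varepsilon\le\tfrac32\eta_{\uni}T\,G_{\uni}(n)$. Rearranging gives the first expression in \eqref{eq:unitary-step-lower}. Substituting $G_{\uni}(n)=mB_{\uni}b^{-n}$ and treating $\delta_0-\varepsilon$ and $B_{\uni}$ as constants gives the $\Omega(b^n/(\eta_{\uni}m))$ form.

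\textbf{Main obstacle.} The delicate point is the domain of validity of Assumption~\ref{asm:flat}: it holds only ``until the expected gap reaches $\varepsilon$''. To handle this, I take $T$ to be the first time at which $\E[\gap_{\uni,T}]\le\varepsilon$. Then every index $t\le T-1$ lies in the regime where flatness applies. If some later $T'$ satisfies the target, it is at least this first hitting time, so the lower bound still holds for $T'$. The only other care needed is interpreting expectations jointly over the initialization distribution, which is consistent with how Assumption~\ref{asm:flat} is stated.
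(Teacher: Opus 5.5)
Your proposal is correct and follows the paper's proof essentially step for step. It applies Lemma~\ref{lem:upper-descent} at each step, bounds each expected squared gradient norm by $G_{\uni}(n)$ via Assumption~\ref{asm:flat}, telescopes from $t=0$ to $T-1$, and rearranges using $\E[\gap_{\uni,0}]-\E[\gap_{\uni,T}]\ge\delta_0-\varepsilon$. Your first-hitting-time treatment of the ``until the expected gap reaches $\varepsilon$'' clause is a small but welcome refinement that the paper leaves implicit.
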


\begin{proof}
By Lemma~\ref{lem:upper-descent} and Assumption~\ref{asm:flat}, 
\begin{equation}
  \E[\Delta C_{\uni,t}]
  \le
  \frac{3\eta_{\uni}}{2}
  \E\norm{\nabla C_{\uni}(\theta_{\uni,t})}_2^2
  \le
  \frac{3\eta_{\uni}}{2}G_{\uni}(n).
\end{equation}
Summing from \(t=0\) to \(T-1\) gives
\begin{equation}
  \E\bigl[C_{\uni}(\theta_{\uni,0})-C_{\uni}(\theta_{\uni,T})\bigr]
  \le
  \frac{3\eta_{\uni}}{2}T G_{\uni}(n).
\end{equation}
The left-hand side equals \(\E[\gap_{\uni,0}]-\E[\gap_{\uni,T}]\ge\delta_0-\varepsilon\). By rearranging it, then we have \eqref{eq:unitary-step-lower}.
\end{proof}

\begin{theorem}[Conditional global convergence of the dissipative VQA]
\label{thm:diss-global}
Under Assumptions~\ref{asm:smooth} and \ref{asm:PL}, gradient descent on
\begin{equation}
  C_{\dis}(\theta)=\Tr\!\left[O\Phi^M_\theta(\rho_0)\right]
\end{equation}
with step size \(\eta_{\dis}=1/\beta_{\dis}\) satisfies
\begin{equation}
  \gap_{\dis,t}
  \le
  e^{-\mu_{\dis}t/\beta_{\dis}}\gap_{\dis,0}.
  \label{eq:diss-global-rate}
\end{equation}
Consequently, the number of iterations required to reach \(\gap_{\dis,T}\le\varepsilon\) obeys
\begin{equation}
  T_{\dis}(\varepsilon)
  \le
  \frac{\beta_{\dis}}{\mu_{\dis}}
  \log\!\left(\frac{\gap_{\dis,0}}{\varepsilon}\right).
  \label{eq:diss-global-iterations}
\end{equation}
If \(\beta_{\dis}/\mu_{\dis}\le\poly(n)\) and \(\log(\gap_{\dis,0}/\varepsilon)\le\poly(n)\), then
\begin{equation}
  T_{\dis}(\varepsilon)
  \le
  \poly(n).
\end{equation}
\end{theorem}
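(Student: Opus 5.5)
The plan is to apply Lemma~\ref{lem:pl-contraction} directly to $f=C_{\dis}$. By Assumption~\ref{asm:smooth}, $C_{\dis}$ is $\beta_{\dis}$-smooth, and the step size $\eta_{\dis}=1/\beta_{\dis}$ meets the admissibility condition $0<\eta\le 1/\beta$. By Assumption~\ref{asm:PL}, $C_{\dis}$ satisfies the PL inequality with constant $\mu_{\dis}$ and infimum $C_{\dis}^*$. So all hypotheses of Lemma~\ref{lem:pl-contraction} hold. The lemma then gives $\gap_{\dis,t}\le e^{-\eta_{\dis}\mu_{\dis}t}\gap_{\dis,0}$, and substituting $\eta_{\dis}=1/\beta_{\dis}$ yields Eq.~\eqref{eq:diss-global-rate}.

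One side remark keeps the exponential form well defined. PL combined with smoothness forces $\mu_{\dis}\le\beta_{\dis}$. Hence $1-\eta\mu\in[0,1)$, and the step $1-\eta\mu\le e^{-\eta\mu}$ used inside the lemma is legitimate. If $\gap_{\dis,0}=0$ there is nothing to prove, so I assume $\gap_{\dis,0}>0$.

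For the iteration count, I require $e^{-\mu_{\dis}T/\beta_{\dis}}\gap_{\dis,0}\le\varepsilon$. Taking logarithms, this holds as soon as $T\ge(\beta_{\dis}/\mu_{\dis})\log(\gap_{\dis,0}/\varepsilon)$. Taking $T_{\dis}(\varepsilon)$ to be the smallest integer satisfying this gives Eq.~\eqref{eq:diss-global-iterations}. If $\varepsilon\ge\gap_{\dis,0}$, the bound is trivial with $T=0$. The ceiling adds at most one iteration, which I absorb or state explicitly.

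The final polynomial claim follows because a product of two quantities each bounded by $\poly(n)$ is again $\poly(n)$ (plus one for the ceiling).

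There is no real obstacle here. The only points needing care are bookkeeping: confirming that $\eta_{\dis}=1/\beta_{\dis}$ is within the range Lemma~\ref{lem:pl-contraction} permits, and handling the integer rounding and the trivial cases when converting the rate into an iteration bound.
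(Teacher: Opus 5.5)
Your proposal is correct and follows the paper's own proof: apply Lemma~\ref{lem:pl-contraction} with $f=C_{\dis}$, $\eta=1/\beta_{\dis}$ and $\mu=\mu_{\dis}$, then solve $e^{-\mu_{\dis}T/\beta_{\dis}}\gap_{\dis,0}\le\varepsilon$ for $T$. Your additional bookkeeping (the integer ceiling, the trivial cases, and the remark $\mu_{\dis}\le\beta_{\dis}$) is care the paper omits; the remark is harmless but not needed, since $1-x\le e^{-x}$ holds for every real $x$.
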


\begin{proof}
Apply PL contraction, i.e. the following Lemma~\ref{lem:pl-contraction}, with \(f=C_{\dis}\), \(\eta=1/\beta_{\dis}\), and \(\mu=\mu_{\dis}\), and then by solving
\begin{equation}
  e^{-\mu_{\dis}T/\beta_{\dis}}\gap_{\dis,0}\le\varepsilon
\end{equation}
we have \eqref{eq:diss-global-iterations}.
\end{proof}

\begin{theorem}[Conditional global iteration-complexity separation]
\label{thm:global-separation}
Suppose the dissipative VQA satisfies the assumptions of Theorem~\ref{thm:diss-global} with
\begin{equation}
  \frac{\beta_{\dis}}{\mu_{\dis}}
  \le
  \poly(n).
\end{equation}
and also suppose that the original VQA satisfies Assumption~\ref{asm:flat}, with
\begin{equation}
  \eta_{\uni}m\le\poly(n),
  \qquad
  \E[\gap_{\uni,0}]\ge\delta_0>0.
\end{equation}
Then
\begin{equation}
  T_{\dis}(\varepsilon)
  \le
  \poly(n)\log\!\left(\frac{\gap_{\dis,0}}{\varepsilon}\right),
  \label{eq:global-sep-dis}
\end{equation}
whereas reducing the expected gap of the original VQA from \(\delta_0\) to any fixed \(\varepsilon<\delta_0\) requires
\begin{equation}
  T_{\uni}(\varepsilon)
  \ge
  \Omega\!\left(\frac{b^n}{\poly(n)}\right).
  \label{eq:global-sep-uni}
\end{equation}
Thus, under the additional PL and trajectory-uniform flatness hypotheses, the dissipative VQA has a conditional exponential advantage in gradient-descent iteration complexity.
\end{theorem}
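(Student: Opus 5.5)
The proof combines Theorem~\ref{thm:diss-global} and Lemma~\ref{lem:unitary-step-lower}, specialized to the stated polynomial hypotheses. No new analytic ingredient is needed; the only work is tracking how the polynomial factors enter the two bounds.

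\emph{Upper bound for the dissipative VQA.} Assumptions~\ref{asm:smooth} and \ref{asm:PL} hold with step size $\eta_{\dis}=1/\beta_{\dis}$, so Theorem~\ref{thm:diss-global} (via Lemma~\ref{lem:pl-contraction}) gives $\gap_{\dis,t}\le e^{-\mu_{\dis}t/\beta_{\dis}}\gap_{\dis,0}$. Hence
\[
T_{\dis}(\varepsilon)\le \frac{\beta_{\dis}}{\mu_{\dis}}\log\bigl(\gap_{\dis,0}/\varepsilon\bigr).
\]
Substituting $\beta_{\dis}/\mu_{\dis}\le\poly(n)$ yields \eqref{eq:global-sep-dis}. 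Here the logarithmic factor is kept explicit rather than also assumed polynomial.

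\emph{Lower bound for the original VQA.} We use Assumptions~\ref{asm:smooth} and \ref{asm:flat} together with $\E[\gap_{\uni,0}]\ge\delta_0$. Let $T$ be any step count with $\E[\gap_{\uni,T}]\le\varepsilon<\delta_0$. Lemma~\ref{lem:unitary-step-lower} then gives
\[
T\ge \frac{2(\delta_0-\varepsilon)}{3\eta_{\uni}G_{\uni}(n)}=\frac{2(\delta_0-\varepsilon)\,b^n}{3\eta_{\uni}\,m\,B_{\uni}}.
\]
Since $\eta_{\uni}m\le\poly(n)$, and $\delta_0-\varepsilon>0$ and $B_{\uni}$ are $n$-independent constants, the right-hand side is $\Omega(b^n/\poly(n))$. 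This is \eqref{eq:global-sep-uni}. Because $b>1$, the ratio $b^n/\poly(n)$ grows exponentially, while $T_{\dis}$ grows only polynomially up to the log factor, which gives the claimed separation.

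\emph{Main obstacle.} The delicate point is the validity window of Assumption~\ref{asm:flat}. It only holds "until the expected gap reaches $\varepsilon$", so the telescoping sum in Lemma~\ref{lem:unitary-step-lower} is legitimate only for steps $t<T$ before that time. I would handle this by letting $T$ be the \emph{first} time $\E[\gap_{\uni,T}]\le\varepsilon$. All earlier steps then satisfy the flatness bound, and any later hitting time is at least this first one.

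A second subtlety concerns Lemma~\ref{lem:upper-descent}, which needs the Hessian bound along the segment and $\eta_{\uni}\le 1/\beta_{\uni}$. Both are supplied by Assumption~\ref{asm:smooth}, which gives a global Hessian bound. Finally, one must interpret "$\varepsilon$ fixed" as $\delta_0-\varepsilon=\Omega(1)$, so that the constant does not cancel the exponential.
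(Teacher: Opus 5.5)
Your proposal is correct and matches the paper's proof: the dissipative bound is Theorem~\ref{thm:diss-global} with $\beta_{\dis}/\mu_{\dis}\le\poly(n)$ substituted, and the unitary bound is Lemma~\ref{lem:unitary-step-lower} with $G_{\uni}(n)=mB_{\uni}b^{-n}$ and $\eta_{\uni}m\le\poly(n)$. Your remarks make explicit two points the paper's two-line proof leaves implicit: taking $T$ to be the first hitting time, so the validity window of Assumption~\ref{asm:flat} covers every telescoped step, and reading ``fixed $\varepsilon$'' as $\delta_0-\varepsilon=\Omega(1)$.
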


\begin{proof}
The dissipative upper bound \eqref{eq:global-sep-dis} follows from Theorem~\ref{thm:diss-global}, and with the assumption \(\eta_{\uni}m\le\poly(n)\), we have the unitary lower bound \eqref{eq:global-sep-uni} follows from the Lemma~\ref{lem:unitary-step-lower}.

\end{proof}

The dissipative VQA does not automatically have proven global convergence from the barren-plateau-avoidance paper alone. However, if the dissipative objective additionally satisfies a global PL condition with polynomial constants, and if the original VQA remains trajectory-uniformly flat, then dissipative VQA has polynomial iteration complexity while the original VQA needs exponentially many steps.

\end{document}